\newcommand{\1}{{\mathbbm 1}}
\newtheorem{Prop}{Proposition}[section]
\begin{document}

\title{Conditional bias robust estimation of the total of curve data  by sampling in a finite population: an illustration on electricity load curves}

\author{Herv\'e Cardot$^{(1)}$,  Anne De Moliner$^{(1,2)}$ and Camelia Goga$^{(3)}$ \\
$^{(1)}$ Institut de Math\'ematiques de Bourgogne, UMR 5584 CNRS \\
Universit\'e de Bourgogne Franche-Comt\'e, France\\
$^{(2)}$ EDF LAB, Palaiseau, France\\
$^{(3)}$ Laboratoire de Math\'ematiques de Besan\c{c}on, UMR 6623 CNRS \\
Universit\'e de Bourgogne Franche-Comt\'e, France\\
 {\small \texttt{herve.cardot@u-bourgogne.fr},  \texttt{camelia.goga@univ-fcomte.fr}} 
}

\maketitle
  
\begin{abstract}
For marketing or power grid management purposes, many studies based on the analysis of the total electricity consumption curves of groups of customers are now carried out  by electricity companies. 
Aggregated total or mean load curves are estimated using individual curves measured at fine time grid and collected according to some sampling design. Due to the skewness of the distribution of electricity consumptions, these samples often contain outlying curves which may have an important impact on the usual estimation procedures. We introduce several robust estimators of the total consumption curve which are not sensitive to such outlying curves. These estimators are based on the conditional bias approach and robust functional methods. We  also derive mean square error estimators of these robust estimators and finally, we evaluate and compare  the performance of the suggested estimators  on Irish electricity data. 
\end{abstract}

\noindent \textbf{Keywords:} bootstrap, conditional bias, functional data, modified band depth,  spherical principal component analysis,  survey sampling, wavelets. 

\section{Introduction and context}

Many studies carried out  by electricity companies are based on the analysis of total electricity consumption curves measured at fine time scales (often half-hourly) for one or several groups of clients sharing some common characteristics (e.g. customers from the same electricity provider, having a particular electric equipment or living in a given geographic area). The aim of these studies can be for example to assist the power grid manager in maintaining the balance between electricity consumption and production at every instant on the power grid. The total consumption curves can also be used to help the Sales Division to quantify the impact of a specific electric use or equipment on the electricity consumption, to build new innovative pricing strategies or to create new services based on customers consumption analysis.

In order to avoid technical and budgetary constraints due to limited bandpass or storage cost of huge databases, or in order to preserve privacy, the strategy of selecting a sample of individual curves from the whole datasets is often employed. The total consumption curve or the load curve of each population of interest is then estimated by using the curves of the customers belonging to the sample.
The estimation with survey sampling techniques of parameters of interest  such as  the total or the mean, the median or the principal components when the data are curves  has been developed over the last years: \cite{cardot2010properties}, \cite{cardot2011horvitz}, \cite{cardot2013confidence} and \cite{chaouch_goga_2012}.  Several sampling designs and estimators have been compared by means of simulation on real electricity data set in \cite{cardot2013comparison} and some asymptotic properties have been established in \cite{cardot2013uniform} and \cite{cardot_goga_lardin_scandin}.  We cite also \cite{degras_20} for the Horvitz-Thompson estimation with optimal rotation of samples. A recent review of research works  in this area is given in \cite{lardin_cardot_goga_2014}.

\medskip

We address here the estimation of the total consumption curve in presence of outlying curves. Following \cite{chambers1986outlier}, we consider only representative outlying curves, namely curves which are representative for some non-sampled units and that do not come from measurement errors.
With electricity data, it is not unusual to have units with consumption electricity much higher than the rest of the population (see Figure~\ref{atrandom}). Such outlying curves may have a huge impact on the estimation and it is very important to detect and treat them correctly.  In order to detect such outlying curves, we use the notion of  depth of a curve introduced by \cite{lopez2009concept}. 

In a finite population setting, stratification is a good method to reduce the potential impact of outlying curves. More exactly, the population is divided into disjointed subpopulations called strata and units from the same stratum are as similar as possible according to several criteria. Unfortunately, due to wrong classifications or sudden changes some units may be very different from the other units belonging to the same stratum. These units are influential and deteriorate the stratum homogeneity and the variance of the usual estimators for the total or the mean will be large. More generally,  a unit is considered influential if, in a given configuration: study population and variable, parameter, estimator and sampling design, its value has a great impact on the variance of the estimator (\cite{Favre-Martinoz_these}). 

Several robust estimators not sensitive to influential units have been suggested in the survey sampling setting for real data,  that are not curves. We can cite for example \cite{chambers1986outlier}, \cite{gwet1992outlier}, \cite{rivest1994statistical}, \cite{kokic1994optimal}, \cite{welsh_ronchetti}. Broadly speaking, these estimators are based on winsorization techniques which consist in down-weighting the influence of outlying units. This is performed  by considering a thresholding function depending on a tuning constant whose value must be chosen carefully. 
The reader is referred to Chapter 11  of \cite{handbookA_2009} for a detailed presentation of the main methods dealing with outliers in survey data. Recently, \cite{beaumont2013unified} considered a new robust estimator for finite population totals. This new approach is based on the notion of conditional bias introduced by \cite{munoz1995new} to measure the influence of a unit and  is closely related to the estimator of \cite{chambers1986outlier}.  Besides, the conditional bias approach does not require to introduce a superpopulation model. Another popular approach for building robust estimators for survey data is the one suggested by  \cite{kokic1994optimal}. 
The use of Kokic and Bell's method would require the knowledge of a model for the probability distribution for functional data. Such superpopulation models  are generally very complex in our curve data framework and cannot generally be reduced to parametric models with a small number of parameters.  A recent comparison of robust estimation strategies in a finite population by \cite{DM2018} has also  shown that a misspecification of the superpopulation model can deteriorate much the accuracy of the robust estimator based on Kokic and Bell's approach.  For these two reasons, we did not consider further the  Kokic and Bell's approach in the present work.

\medskip 

The aim of this paper is to build robust design-based estimators of the total consumption curves which are less sensitive to influential curves. Since generally the curve data are observed at a finite number of time instants, the easiest and most intuitive way to construct such a robust estimator is to apply the method suggested by \cite{beaumont2013unified} at each instant of time. Unfortunately, this method does not take into account possible temporal correlations. In order to deal with this issue, we can transform the data by using dimension reduction methods such as functional principal component analysis or projection of the data onto basis functions. 
We suggest in this paper to perform a robust principal component analysis (PCA) as introduced by \cite{locantore1999robust}  in order to obtain uncorrelated real principal components. The total consumption curve may be then approximated in a smaller dimensional space spanned by robust eigenfunctions.
 Then, the coordinates in this new robust basis can be robustified by using  the method of \cite{beaumont2013unified} and a second robust estimator for the total consumption  is then obtained. 
 
 Instead of using robust PCA, one may also project the data onto a basis functions, such as wavelets, which are known to be effective to deal with irregular temporal signals such as individual electricity load curves (see {\it e.g.} \cite{Mallat98}). A third robust  estimator for the total consumption  is then obtained by robustifying the coordinates in the wavelet  basis.

The choice of a positive cut-off constant $c$ is required to build these estimators. Choosing an adequate value  is crucial since a trade-off between bias and variance must be made. We suggest in this paper a new criterion for choosing this tuning constant based on the $q$th power of the conditional bias. We also introduce a functional  truncation method based on the concept of depth of curves (\cite{lopez2009concept}) as a functional measure of outlyingness. This method consists in finding a zone which entirely  contains the conditional biases considered as ``inliers''  and to use the upper and lower bounds of this zone as truncation limits. A fourth estimator may then be constructed. 

\medskip

This paper is organized as follows: in Section 2, we describe the estimation of totals with sampling designs from a finite population of curves and we extend the definition of the conditional bias for functional data. In Section 3, we apply point-wisely   the approach of \cite{beaumont2013unified} for building a robust estimator for the total curve and we use their minimax criterion for choosing the tuning constant as well as a new one based on the $q$th power of the conditional bias.
We introduce in Section 4 two robust estimators based on dimension reduction techniques  and in Section 5, a robust estimator built for the functional truncation method based on the modified band depth as suggested by \cite{lopez2009concept}. In Section 6, we address the question of the estimation of pointwise mean square error.  Due to confidentiality reasons the electricity data from EDF can not be used for publication. We illustrate, in Section 7, the performances of the different robust  approaches on the estimation of the total curve on Irish electricity consumption curves.  Concluding remarks are given in brief  Section 8 and some proofs are postponed in an Appendix.

\begin{figure}[ht]
\begin{center}
\includegraphics[scale=0.7]{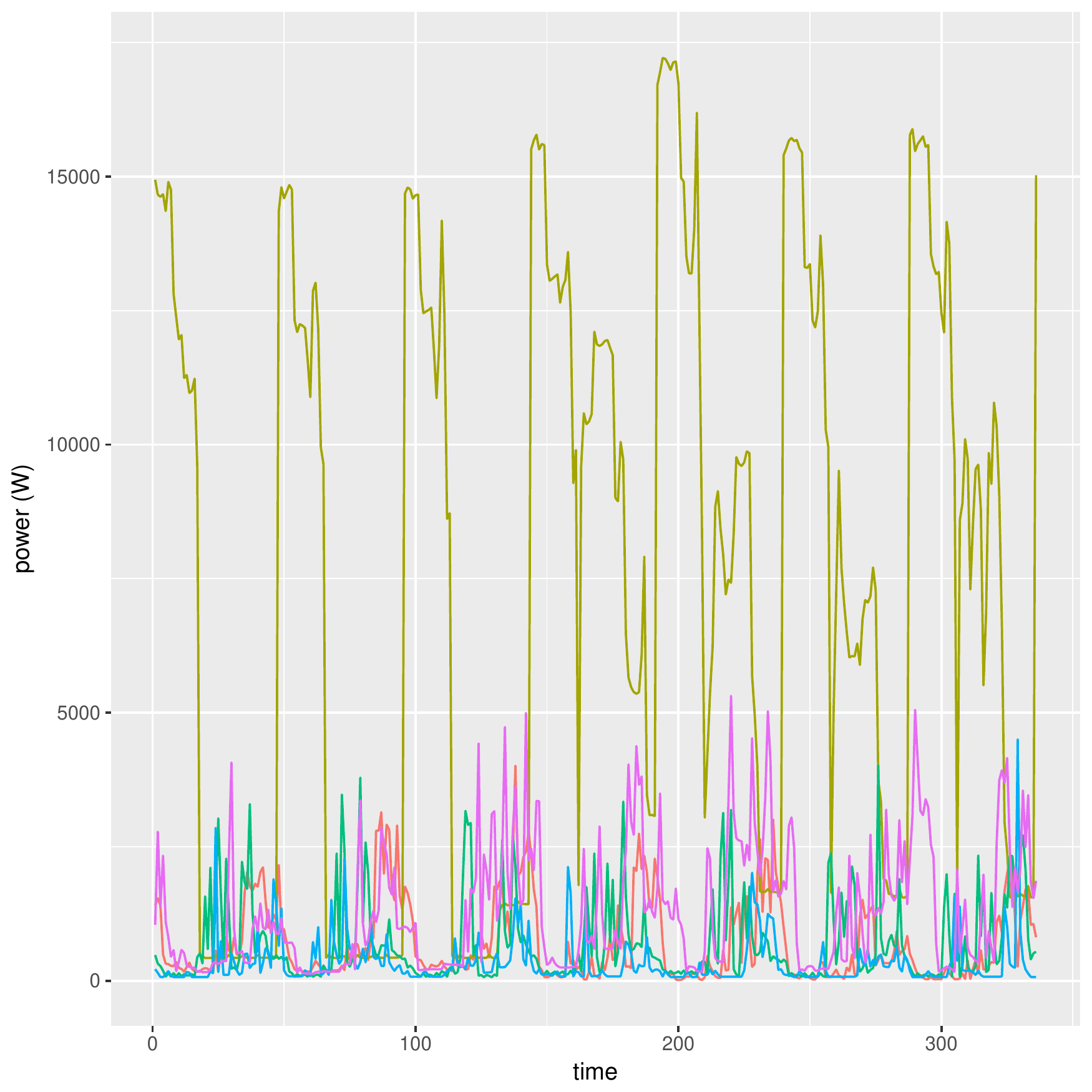} 
\caption{A sample of five load electricity curves measured every half an hour over a period of one week.}
\label{atrandom}
\end{center}
\end{figure}

\section{Robust estimation in a finite population of curves}
\subsection{Notations and framework}

Let $U$ be a population of interest of known size $N$. To each unit $i$ of the population we associate a (load) curve defined over a time interval $[0,T]$: for each unit $i$ we have a function of time $Y_i(t), t \in [0,T]$, where the continuous index $t$ represents time.\\ 
Our goal is to estimate the total curve $t_Y$ over the population:
\begin{equation}
t_Y=\sum_{i \in U} Y_i, 
\end{equation}
with value $t_Y(t)=\sum_{i \in U} Y_i(t)$ for each instant $t \in [0,T].$ In practice, the curves are not observed continuously for $t \in [0,T]$ but only for a set  of $D$ measurement instants $0=t_1 < t_2 < ... < t_D=T$ which are generally supposed to be equi-spaced and the same for all units. Under weak assumptions on the number of discretization points, the regularity of the trajectories and the sampling design, it can be shown that the approximation error due to linear interpolation or kernel smoothing is negligible compared to the sampling error (see \cite{cardot2011horvitz} and \cite{cardot2013confidence}).

To evaluate and compare the different approaches, we consider in this work a test population composed of $N=3994$ electricity consumption curves extracted from the Irish Commission for Energy Regulation (CER) Smart Metering Project that was conducted in 2009-2010 (CER, 2011)\footnote{The data are available on request at the address: \\ \texttt{http://www.ucd.ie/issda/data/commissionforenergyregulation/}}. The electricity consumptions are recorded during one week, from the 18th to the 24th of January 2010,  we have $D=336$ points in time (see Section \ref{section_application} for more details).  We display in Figure~\ref{atrandom}, the electricity consumption curves for five smart meters selected from that population. 

A sample $s$ of size $n$ is selected from $U$ according to a random sampling design $p(\cdot)$. We denote by $I_i$ the sample membership indicator of unit $i$ which is equal to 1 if the unit $i$ belongs to the sample $s$ and zero otherwise. The probability that unit  $i$ will be included in a sample is denoted  by $\pi_i= P(I_i=1)=\sum_{s, i\in s}p(s)$ and the probability that both of the units $i$ and $j$ will be included is denoted by  $\pi_{ij}= P(I_i I_j =1)=\sum_{s, (i,j)\in s}p(s). $ The first-order inclusion probabilities $\pi_i$ and the second-order inclusion probabilities $\pi_{ij}$ are assumed to be known and strictly positive.
We also assume that $\pi_i$ and $\pi_{ij}$ do not depend on time $t$.

We will particularly be interested by two simple sampling designs,  simple random sampling without replacement (SRS) and stratified sampling with simple random sampling within strata (STR). In STR, units with similar characteristics (according to some auxiliary information) are grouped into disjointed strata $U_h$ of size $N_h$ for $h=1, \ldots, H$. A simple random sampling without replacement  $s_h$ of size $n_h$ is selected from $U_h$ and the selection in one stratum is independent of the selection in all other strata. 
Note also that in the following, inference is made under the design-based approach in a finite population setting. This means that the sample membership indicators $I_i, \ i \in U$ are  binary random variables and the values of the variable of interest $Y_i$ are treated as being deterministic. In this context, the total curve $t_Y$ can be estimated by  the Horvitz-Thompson estimator,
\begin{align}
\hat{t}_Y(t) &=\sum_{i \in s} d_i Y_i(t), \quad t\in [0,T],
\label{HTeq}
\end{align}
where $d_i=1/\pi_i$, $i \in U$ are the sampling weights. The  Horvitz-Thompson estimator $\hat t_Y$ is a random curve,  with covariance function  given by 
\[
Cov(\hat{t}_Y(r),\hat{t}_Y(t))=\sum_{i\in U}\sum_{j\in U}(\pi_{ij}-\pi_i\pi_j)\frac{Y_i(r)}{\pi_i}\frac{Y_j(t)}{\pi_j}, \quad \mbox{for all } r, t \in [0,T].
\]
A unit $i$ with a large sampling weight $d_i$ and a large value of $Y_i$ at some time instant $t$ is influent for the Horvitz-Thompson estimator given in  \eqref{HTeq} since it increases considerably the covariance of the Horvitz-Thompson estimator given above.\\

\subsection{Conditional bias when the data are curves}\label{direct}
In order to construct robust estimators, \cite{beaumont2013unified} have used the conditional bias as a tool for quantifying  the influence of sampled and non sampled units on an estimator. The conditional bias, as defined by \cite{beaumont2013unified}  is, in a design-based approach, the expectation of the estimator conditionally to the inclusion indicator $I_i$ of the unit $i$. In our context, the  conditional bias of a sampled unit is a function of time $t$,
\begin{equation}
B_{1i}^{HT}(t)=E_p (\hat{t}_Y(t) |I_i=1) -t_Y(t)=\sum_{j\in U}\left(\frac{\pi_{ij}}{\pi_i\pi_j}-1\right)Y_j(t), \quad t\in [0, T], \label{def_biais_cond}
\end{equation}
and for a non-sampled unit: 
\begin{equation}
B_{0i}^{HT}(t)=E_p (\hat{t}_Y(t) |I_i=0) -t_Y(t)=-\frac{1}{d_i-1}B_{1i}^{HT}(t), \quad t\in [0, T],
\end{equation}
where $E_p$ is the expectation with respect to the sampling design $p$. 
For  simple random sampling without replacement (SRS), the conditional bias have the following expression,
\begin{align*}
B^{HT}_{1i}(t) &=\frac{N}{N -1}\left(\frac{N}{n} -1\right)(Y_i(t) - \overline{Y}_U(t)), \quad i\in U, \quad t\in [0,T],
\end{align*} 
where $\overline{Y}_U(t)=\sum_{i\in U} Y_i(t)/N,$ and for stratified sampling with SRS within each stratum (STR), the conditional bias of a sampled unit $i$ belonging to the stratum $U_h$ is 
\begin{equation*}
B^{HT}_{1i}(t)=\frac{N_h}{N_h -1}\left(\frac{N_h}{n_h} -1\right)(Y_i(t) - \overline{Y}_{U_h}(t)), \quad i\in U_h, \quad t\in [0,T],\label{bias}
\end{equation*}
where $\overline{Y}_{U_h}(t)=\sum_{i\in U_h}Y_i(t)/N_h$ is the mean curve within stratum $h$. We can  see that for stratified sampling, a unit $i\in U_h$ has a large  influence if its value $Y_i(t)$ is far from the mean stratum $\overline{Y}_{U_h}(t)$ and its influence is even larger if it is associated with a large sampling weight $N_h/n_h$. 

We can see from (\ref{def_biais_cond}) that the conditional bias $B^{HT}_{1i}(t)$ is unknown and must be estimated. A conditionally design-unbiased estimator of $B^{HT}_{1i}(t)$, given $I_i=1$, is:
\begin{eqnarray}
\hat B^{HT}_{1i}(t)=\sum_{j\in s}\left(\frac{\pi_{ij}-\pi_i\pi_j}{\pi_j\pi_{ij}}\right)Y_j(t), \quad \mbox{for all} \quad t\in [0,T].
\end{eqnarray}
In the case of SRS sampling, the conditional bias can be estimated by 
\[
\hat{B}^{HT}_{1i}(t)=\frac{n}{n -1}\left(\frac{N}{n} -1\right)(Y_i(t) - \overline{Y}_s(t)), \quad i\in U, \quad t\in [0,T],
\]
where $\overline{Y}_s(t)=\sum_{i\in s} Y_i(t)/n$ and for STR sampling,  it can be estimated by 
\begin{equation}
\hat{B}^{HT}_{1i}(t)=\frac{n_h}{n_h -1}\left(\frac{N_h}{n_h} -1\right) (Y_i(t) - \overline{Y}_{s_h}), \quad i\in U_h, \quad t\in [0,T],
\label{htex}
\end{equation}
where $\overline{Y}_{s_h}=\sum_{i\in s_h}Y_i/n_h$ is the sample mean of $Y$-values within the stratum $h. $

Consider again the test population of Irish electricity consumption curves. Two estimated conditional bias curves, with  simple random sampling of size $n=200$, are drawn in Figure~\ref{sanscorrection}. We can remark  on this small example how different, in shape and values, the conditional bias can be from one individual to another and also, from one instant of time to another.  
\begin{figure}[ht]
\begin{center}
\includegraphics[scale=0.7]{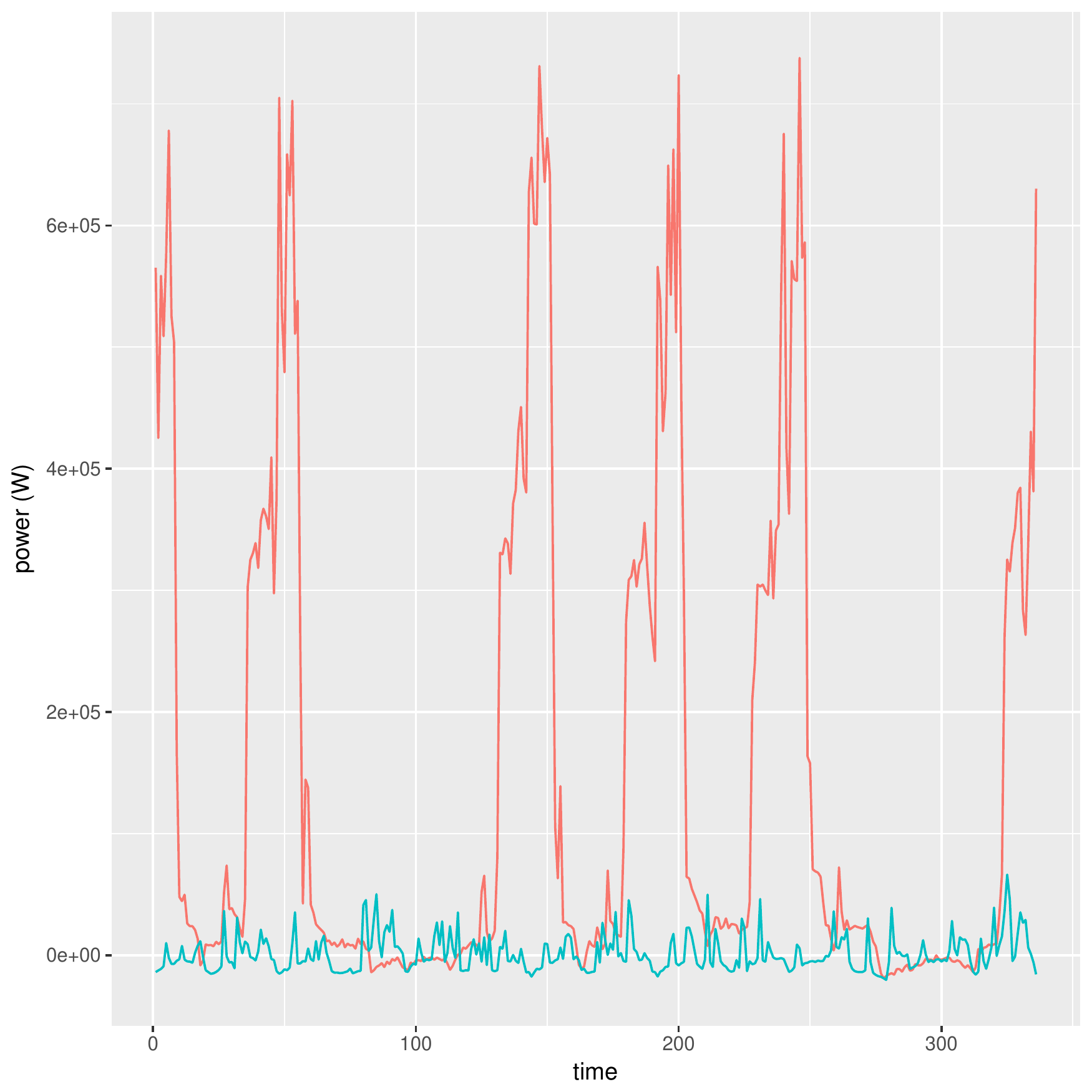} 
\caption{Two conditional bias curves estimated with simple random sampling of size $n=200$.}
\label{sanscorrection}
\end{center}
\end{figure}

Following the lines of \cite{beaumont2013unified}, we obtain in our functional setting:
\begin{equation}
\hat{t}_Y(t)-t_Y(t) = \sum_{i \in s} B^{HT}_{1i}(t) +  \sum_{i \in U-s } B^{HT}_{0i}(t) +\left(\sum_{i\in s}d_iA_i(t)-\sum_{i\in U}A_i(t)\right), \quad t\in [0, T],\label{sampling_error}
\end{equation}
where 
\begin{align*}
A_i(t) &=\frac{-1}{1-\pi_i}\sum_{j\in U, j\neq i}\frac{\pi_{ij}-\pi_i\pi_j}{\pi_j}Y_j(t). 
\end{align*}

The inclusion probabilities not varying with  time, it is straightforward to see, as in \cite{beaumont2013unified}, that  the term in parentheses at the right-hand side of (\ref{sampling_error}) is zero for Poisson sampling. Moreover, as shown in the Appendix, under broad assumptions upon the inclusion probabilities which are satisfied by the simple random sampling without replacement and fixed-size high-entropy designs, the term in parentheses at the right-hand side of (\ref{sampling_error}) is negligible in the sense that 
\[
\sup_{t \in [0,T]} \left| \sum_{i\in s}d_iA_i(t)-\sum_{i\in U}A_i(t) \right| =O_p(n^{-1/2}).
\]
Thus,  we can consider that
\begin{align}
\hat{t}_Y(t) &\simeq t_Y(t)+\sum_{i \in s} B^{HT}_{1i}(t) +  \sum_{i \in U-s } B^{HT}_{0i}(t), \quad t\in [0,T].   
\label{def:approxtB}
\end{align}
The first term at the right-hand side of  previous approximation is not random. Consequently,  the precision of the estimator will be influenced only by the two other terms in~(\ref{def:approxtB}). The conditional bias of a particular unit can thus be interpreted as the contribution of this unit to the sampling error.  An influential unit is defined as a unit with a large conditional bias and the idea is to downplay the impact of such units at the right-hand side of (\ref{sampling_error}). 
A new challenge, compared to the univariate framework studied in \cite{beaumont2013unified} comes  from the fact that  the conditional bias is now a function of time and as we can note in  Figure~\ref{sanscorrection}, the shape and the values of the conditional bias can be very different from one individual, or time instant, to another. Different ways of dealing with this issue are developed in the following.

\section{Point-wise robust estimators} \label{sec:pointwr}

A first possibility is to  directly apply the method of \cite{beaumont2013unified} at the $D$ instants $t_1, \ldots, T_D$. Considering the Huber function,   $\psi_{c}(z)=\mbox{sgn}(z) \mbox{min}(|z|,c)$ which depends on the tuning  constant $c>0$, with $\mbox{sgn}(z)=1$ if $z\geq 0$ and $-1$ otherwise, 
 we can construct the following point-wise robust estimator  of $t_Y(t)$:
\begin{eqnarray}
\hat t_Y^{(R1)}(t) &=& \hat t_Y(t)+\sum_{i\in s}\psi_{c(t)} \left(B^{HT}_{1i}(t) \right)-\sum_{i\in s}B^{HT}_{1i}(t)\label{robust_estim1}\\
& = & \hat{t}_Y(t)+\Delta(c(t)), \quad \mbox{for all} \quad t\in [0,T].\label{def:deltact}
\end{eqnarray}
So, for a given value $c(t)$ those conditional bias $B^{HT}_{1i}(t)$ larger than $c(t)$ will be cut-off at $c(t)$ in the second-term at the right-hand  side of  (\ref{robust_estim1}). Clearly, the efficiency of the robust estimator depends on the choice of the tuning constant $c(t)$. As $c(t)$ increases, the estimator becomes closer to the non robust estimator. The new estimator $\hat{t}_Y^{(R1)}(t)$ is biased but of smaller variance than that of the non robust one, so we hope to improve the global precision measured by the mean squared error. The trade-off between variance and bias is controlled again by the tuning constant $c(t)$: a large value for $c(t)$ implies small bias but large  variance and a small value for $c(t)$ implies large bias and small variance.

\subsection{Minimax approach for choosing the optimal tuning constant}
We determine the optimal tuning constant in a pointwise manner, namely we determine for each $t$, the optimal value $c_{opt}(t)$ is chosen according to the minimax approach suggested by \cite{beaumont2013unified}. The value $c_{opt}(t)$,  that is not necessarily unique, satisfies
\begin{align}
c_{opt}(t) & = \mbox{arg}\min_{c \geq0} \max_{i\in s} \left| \hat B^{RHT}_{1i}(c(t)) \right|,
\label{def:copt}
\end{align}
where $\hat B^{RHT}_{1i}(c(t))$ is the estimator of the conditional bias of the robust estimator $\hat t^{R}_{y}(t)$. Using relation (\ref{def:deltact}), the conditional bias of the robust estimator is $B^{RHT}_{1i}(c(t))=B^{HT}_{1i}(t)+E_p(\Delta(c(t))|I_i=1)$ and can be estimated by 
\begin{eqnarray*}
\hat{B}_{1i}^{RHT}(t)=\hat{B}_{1i}^{HT}(t)+\Delta(c(t)).
\end{eqnarray*}
Following \cite{beaumont2013unified}, the optimal value of $\Delta(c(t))$ is
\begin{align*}
\Delta \left(c_{opt}(t) \right) &=-\frac{1}{2}\left(\hat{B}^{HT}_{min}(t)+\hat{B}^{HT}_{max}(t) \right),
\end{align*}
where $\hat{B}^{HT}_{min}(t)=\mbox{min}_{i\in s}\hat B^{HT}_i(t)$ and $\hat{B}^{HT}_{max}(t)=\mbox{max}_{i\in s}\hat B^{HT}_i(t)$ are the minimum and  respectively, the maximum of the estimated absolute conditional biases $\hat{B}_i(t)$ over the sample. 
The optimal robust estimator is therefore, at each instant $t$,
\begin{align}
\hat{t}_{Yopt}^R(t) & = \hat{t}_Y(t)+\Delta(c_{opt}(t))\nonumber\\
 & =  \hat{t}_Y(t)-\frac{1}{2} \left(\hat{B}^{HT}_{min}(t)+\hat{B}^{HT}_{max}(t) \right).
 \label{def:hattoptR}
\end{align}

\begin{figure}[ht]
\begin{center}
\includegraphics[scale=0.75]{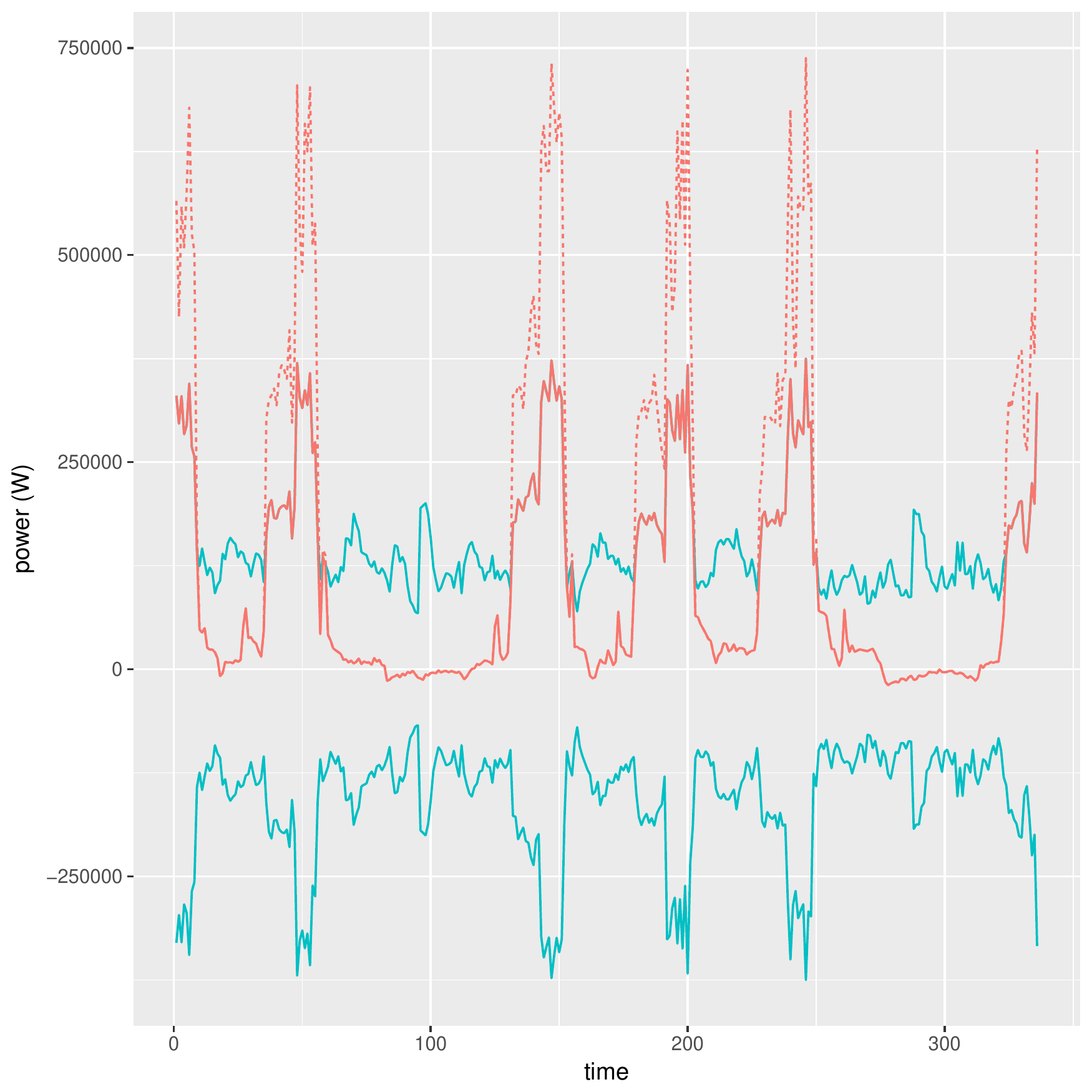} 
\caption{An estimated conditional bias (red solid line) and the same conditional bias truncated (red dotted line) for cut-off limits (blue lines) defined point-wisely.}
\label{apreslacorrectionponctuel}
\end{center}
\end{figure}

Remark that the optimal tuning constant $c_{opt}(t)$ varies over time, but there is not need to compute it in order to construct the optimal robust estimator. Note also that this method is essentially univariate since it deals independently with the different points in time and thus does not necessarily preserve the correlations between instants in the estimated total curves. We can think that  some information is lost by not making use of the strong temporal correlations between the values of $Y$ at different times. A robust estimator which takes into account  such possible correlations is presented in Section~\ref{rfpca}.

\subsection{A new criterion for choosing the optimal tuning constant}\label{qth_criterion}

We suggest minimizing the  sum over the sample of the $q$th power of the absolute value of $\widehat B^{RHT}_{1i}(c(t)),$ the conditional bias of the robust estimator.  This means that, for all $t$, we look for the optimal constant $c(t)$  satisfying the following criterion: 
\begin{eqnarray}
\label{pbNewInst}
c_{opt}^{alt}(t) &=&  \mbox{arg}\min_{c \geq 0}  \sum_{i \in s }  \left| \widehat B^{RHT}_{1i}(c(t))\right|^q, \nonumber\\
 &= & \mbox{arg}\min_{c \geq 0}  \sum_{i \in s }  \left| \widehat B^{HT}_{1i}(c(t)) +\Delta(c(t))\right|^q,  
\end{eqnarray}
where $\Delta(c(t))$ is given in (\ref{def:deltact}) and $q$ is a positive constant. The optimal solution may be found by numerical algorithms such as Newton-Raphson.  
 
By using this criterion, we penalize the conditional bias computed for the whole sample of individuals not only the maximum and the minimum of $ \widehat{B}^{RHT}_{1i}(c(t))$ as in \cite{beaumont2013unified}. In this way, each  $\left| \widehat{B}^{RHT}_{1i}(c(t))\right|$ for $i\in s$ will contribute to the optimisation research and as a consequence, the presence  in the sample at one instant $t$ of  a unit with very high influence will not cause a sudden change in the function $c(t)$ as it was the case with the minimax approach. 
  
Large values of $q$ will penalize large absolute values of $| \widehat B^{RHT}_{1i}(c(t))|$ while values of $q$ between $0$ and $1$ will penalize small conditional bias $| \widehat B^{RHT}_{1i}(c(t))|.$ So, for large $q,$ this new criterion will be close to the minimax criterion but with better regularity properties allowing the use of bootstrap methods in order to estimate the variance. To obtain robust estimates we thus advise to consider large values of $q$ ($q \geq 4$) that will ensure that high values of the conditional bias are sufficiently penalized.   
Note that for $q=1$, we obtain the median conditional bias curve which is not of interest here because it is robust to outlying (extremely large or small) values.   As a consequence, it will not be affected by these values and it can not be used to truncate outlying values.

\section{Robust estimation based on dimension reduction}

In a functional data setting, it is very common to use dimension reduction methods. In this paper, we use robust functional principal analysis and projection on basis functions such as wavelet function in order to transform the functional robust estimation issue into a series of univariate robust estimation issues.

\subsection{Spherical principal components analysis}\label{rfpca}

Principal components analysis is a popular tool to explore and to represent graphically the variations around their barycenter of multivariate and functional data (see \cite{Jolliffe2002} and \cite{Ramsay_Silverman_Livre} as well as \cite{cardot2010properties} for a presentation in a finite population setting). The aim is to  build new non-correlated variables, called principal components, that are linear combinations of the initial variables and of maximum variance. The principal components   are obtained via  the eigenfunctions of the covariance function $\gamma$ of the data $Y_i,  i=1, \ldots, N,$
\[
\boldsymbol{\gamma}(r,t)=\frac{1}{N}\sum_{i=1}^N(Y_i(r)-\overline{Y}_U(r))(Y_i(t)-\overline{Y}_U(t)), \quad r,t\in [0,T],
\]
where $\overline{Y}_U=N^{-1}\sum_{i\in U}Y_i$ is the mean, or the center, of the data.  However, it is well known that  the mean and the covariance are highly sensitive to outlying units  and consequently, principal components are also known to be highly non robust. 

 We consider now a robust version of PCA named spherical PCA (see \cite{locantore1999robust}) that has nice properties (see \cite{Gervini2008}) and is easy to compute. It consists in considering the eigenfunctions of the following sphericised ``covariance'' function
 \begin{eqnarray}
 \boldsymbol{\Gamma}(r,t)=\frac{1}{N}\sum_{i=1}^N\frac{Y_i(r)-m_N(r)}{||Y_i-m_N||}\cdot\frac{Y_i(t)-m_N(t)}{||Y_i-m_N||}, \quad r,t \in [0, T]\label{spherical_covariance}
 \end{eqnarray}   
where  $m_N$ is a robust indicator of location and $\| . \|$ denotes the $L^2[0,T]$-norm ($\|Y\|^2 = \int_0^T Y^2(t) dt$). Considering the unit norm functions $(Y_i(t)-m_N(t)) /\|Y_i-m_N \|$ instead of  $Y_i(t)-\overline{Y}_U(t),$ we perform a kind of winsorisation of the outlying  curves $Y_i$.
 As in \cite{locantore1999robust}, we use the geometric median  (see \cite{Kemperman1987} or \cite{Small1990}) as a robust location parameter of a set of points belonging to the space $L^2[0,T]$.
With a finite population point of view, the median curve of  the elements $Y_1, \ldots, Y_N$,  is defined by: 
\begin{align}
m_N &=\mbox{arg}\min_{ y \in L^2[0, T]}\sum_{i=1}^N\| Y_i-y\| .
\label{def_med}
\end{align}
The relation (\ref{def_med}) arises as a natural  generalization of the well-known characterization of the univariate median. It is also called the \textit{spatial median} (\cite{Brown}) because, from a geometric point of view, the median is the point that minimizes the sum of distances to the  points in the population.  The names \textit{$L_1$-median} (\cite{Small1990})  and \textit{geometric median} (\cite{chaudhuri_1996}) have been also employed for  $m_N$.

If we assume  that  $Y_i$, for  $i=1,\ldots, N,$ are not concentrated on a line,  the median exists and is unique (see \cite{Kemperman1987}). If $m_N\neq Y_i$ for all $i=1,\ldots, N,$ then it is the unique  solution of the following estimating equation:
 \begin{align}
 \sum_{i=1}^N\frac{Y_i(t) - m_N(t)}{||Y_i - m_N||} &=0, \quad t\in [0, T]
 \label{medU}
 \end{align}
and it may be computed by using fast iterative algorithms such as  Weiszfeld's algorithm  (see \cite{Weiszfeld37} and \cite{VZ00}) for multivariate data or gradient algorithms (see \cite{Gervini2008}) for sparse functional data.

Then, performing spherical PCA consists in computing the eigenvalues $\lambda_j$ and the corresponding orthonormal eigenfunctions $\mathbf{v}_j$ of the covariance $ \boldsymbol{\Gamma}$ of these projected data instead of the initial data. As for the location estimate, the influence of the outlying observations can be greatly reduced. Furthermore \cite{Gervini2008} shows that if the distribution of $Y_i$ is symmetric, then the covariance $\boldsymbol{\gamma}$ and the spherical covariance $\boldsymbol{\Gamma}$ have the same orthonormal eigenfunctions $\mathbf{v}_j,$ $j=1, \ldots, N$.

The curves $Y_i$ in the population can also be approximated, in this new orthonormal basis,  leading to a kind of robust  Karhunen-Loeve expansion, that allows to get the best approximation of  $Y_i(t)-m_N(t)$ in a finite $K$-dimensional space (see \cite{Ramsay_Silverman_Livre}):
\begin{align}
Y_i(t) &= m_N(t) + \sum_{k=1}^K \langle Y_i-m_N,\mathbf{v}_k \rangle \mathbf{v}_k(t)+\epsilon_i(t), \quad \mbox{for}\quad i\in U,
\end{align}
where $\epsilon_i(t) = Y_i(t) - m_N(t) - \sum_{k=1}^K \langle Y_i-m_N,\mathbf{v}_k\rangle \mathbf{v}_k(t)$ is a remainder term and $\langle \cdot,\cdot \rangle$ is the inner product in $L^2[0,T]$. Here, $ \langle Y_i-m_N,\mathbf{v}_k \rangle \mathbf{v}_k(t)$ is the projection of the centered curve $Y_i-m_N$ onto the rank one space generated by  function $\mathbf{v}_k$. For our purpose, we consider the same (large enough) value of $K$ for all the curves $Y_i$  to keep most of the variation in the data.

With these considerations, the approximation of the total curve  in a finite $K$-dimensional space is given by
\begin{align}
t_Y(t) & \approx  N m_N(t) +   \sum_{k=1}^KF_k\mathbf{v}_k(t),
\label{KL-total}
\end{align}
where
\begin{align*}
F_k &=\sum_{i \in U} \langle Y_i-m_N,\mathbf{v}_k \rangle, \quad \mbox{for}\quad k=1, \ldots, K,
\end{align*}
is the population total of the projections on $\mathbf{v}_k$ of the "centered" data $Y_i-m_N$.  So, we can write the finite population total $t_Y$ as the sum of a robust location parameter, the median $m_N$, and the sum of $K$ products between the robust eigenfunctions $\mathbf{v}_k(t)$ and the real coordinates $F_k$ in this new basis. 
The interest of considering decomposition (\ref{KL-total}) is that the total of a function with a continuous time index  is decomposed into a new multivariate problem in which robustification techniques can be applied to each real component. 

\subsubsection{Estimation of the robust principal components} 
In order to estimate $t_Y,$ we need to estimate first the median and the eigenfunctions $\mathbf{v}_k$ for all  $k=1, \ldots, K$. 
A natural estimator of the geometric median $m_N$ is given by the solution $\hat m$ of the following non linear estimating equation (see   \cite{chaouch_goga_2012}),
 \begin{align}
 \sum_{i\in s}d_i\frac{Y_i(t) - \hat m(t)}{||Y_i - \hat m||} &=0, \quad t\in [0, T].
 \label{estim_median_sondage}
 \end{align} 
Numerically, the solution is generally reached in a few iteration of a weighted version Weiszfeld's algorithm.

The spherical covariance function given in (\ref{spherical_covariance}) is estimated as follows
 \begin{eqnarray}
 \widehat{\boldsymbol{\Gamma}}(r,t)=\frac{1}{N}\sum_{i\in s}d_i\frac{(Y_i(r)-\hat {m}(r))}{||Y_i-\hat{m}||}\cdot\frac{(Y_i(t)-\hat {m}(t))}{||Y_i-\hat{m}||}, \quad \mbox{for all}\quad r,t \in [0, T],
 \end{eqnarray}
where $\hat m$ is the estimator of the median $m_N$ given in (\ref{estim_median_sondage}). Then, estimators of the eigenvalues $\lambda_j$ of $\boldsymbol{\Gamma}$ with the associated eigenfunctions $\hat{\mathbf{v}}_j, j=1\ldots, N$ are obtained  by the spectral decomposition of the estimated covariance $\widehat{\boldsymbol{\Gamma}}(r,t)$.

A natural estimator of the approximation of $t_Y$ given in (\ref{KL-total}) is obtained by replacing the unknown quantities with their estimators: 
\begin{align}
\hat t_Y^{(2)}(t)&=N\hat m(t)+ \sum_{k=1}^K\hat F_k\hat{\mathbf{v}}_k(t),
\label{non_robust1}
\end{align}
where $\hat F_k=\sum_{i \in s}d_i \langle Y_i-\hat m,\hat{\mathbf{v}}_k \rangle$
is the substitution estimator for $F_k$. Note that even if  $\hat m(t)$ and $\hat{\mathbf{v}}_k(t)$ are robust estimates, the estimator given in (\ref{non_robust1}) is not robust because the coordinates $\hat F_k$, $k =1, \ldots, K$  are not robust.

\subsubsection{Robustifying the coordinates in the spherical PCA basis} \label{sec:estsphpca} 

We suggest to build the following robust estimates of the coordinates
\begin{align}
\hat{F}_k^R &=\hat{F}_k -  \sum_{i \in s}\hat{B}^{F}_{1i,k} + \sum_{i \in s} {\psi_{c_k}}(\hat{B}^{F}_{1i,k}), \quad k=1, \ldots, K,
\end{align}
where $\hat{B}^{F}_{1i,k}=\displaystyle\sum_{j\in s}\left(\frac{\pi_{ij}-\pi_i\pi_j}{\pi_j\pi_{ij}}\right) \langle Y_j-\hat{m},\hat{\mathbf{v}}_k \rangle$ is the estimator of the conditional bias of  $\hat F_k$, and $\psi_{c_k}$ is  the Huber function depending on the tuning constant $c_k$. An optimal value for $c_k$ may be found by using the minimax criterion or the new criterion defined in (\ref{pbNewInst}). Finally,  the second robust estimator of $t_Y$ is defined as follows
\begin{align}
\hat t_Y^{(R2)}(t) &=N\hat m(t)+ \sum_{k=1}^K\hat F^{R}_k\hat{\mathbf{v}}_k(t), \quad t \in [0,T].
 \label{robust_2}
\end{align}

\subsection{Projection on wavelet basis}
\label{remOndelettes}
Instead of using principal components, we may project data onto a basis of functions  $\phi_1, \dots, \phi_Q$ which do not depend on the data. Electricity load curves are known to be irregular, as seen in~Figure~\ref{atrandom}, and natural candidates are wavelet basis (see \cite{Mallat98}). 

The curves $Y_i, i\in U$  may be expanded  as follows
\begin{align*}
Y_i(t) &= \sum_{q =1}^{Q} a_{iq} \phi_q (t) + \epsilon_i(t), \quad t\in [0,T],
\end{align*}
where $\epsilon$ is an approximation residual. Note that, unlike the principal component analysis, the functions $\phi_1, \dots, \phi_Q$ are known and do not need to be estimated. The coefficients $a_{iq},$ for $q=1, \ldots, Q,$ depend on $Y_i$ and  are unknown for the non-sampled individuals. 
As in  robust principal component analysis, the total curve may be approximated by 
\begin{align}
t_Y(t) \simeq& \sum_{q=1}^Q t_{a_q}\phi_q(t), \quad t\in [0,T],
\end{align}
where $t_{a_q}=\sum_{i \in U}  a_{iq}$ is the unknown real population total of the coefficients $a_{iq}, $ for $q=1, \ldots, Q.$ The Horvitz-Thompson estimator of this new approximation of the total $t_Y$ is given by 
\[
\hat t^{(3)}_{Y}(t)=\sum_{q=1}^Q\hat{t}_{a_q}^{HT} \phi_q(t), \quad t\in [0,T],
\]
where $\hat{t}_{a_q}^{HT}= \sum_{i \in s} d_i a_{iq}. $ Robust estimators of $\hat{t}_{a_q}^{HT}$ may be built as above. Our third robust estimator of $t_Y$ is defined as follows:
\begin{align}
\hat{t}_{Y}^{(R3)} (t) &=   \sum_{q=1}^Q \hat{t}_{a_q}^{RHT} \phi_q(t), \quad t\in [0,T],
\label{robustR3}
\end{align}
where $\hat{t}_{a_q}^{RHT}=\hat{t}_{a_q}^{HT}+\sum_{i\in s}(\psi_{c_q}(\hat{B}_{1i,q})-\hat B_{1i,q})$ is the robust estimator of $\hat{t}_{a_q}^{HT},$ $q=1, \ldots, Q$, with $\hat B_{1i,q}$ the conditional bias of $\hat{t}_{a_q}^{HT}$ and $\psi_{c_q}$ the Huber function depending on the tuning constant $c_q$ whose value may be determined for each $\hat{t}_{a_q}^{HT}$.

\section{Global functional truncation methods based on statistical depth }
\label{fonctronc}

The aim of this section is to introduce a global way of truncating the conditional-bias curve. In order to do that, we use the notion of statistical depth which allows to  define an order relation in a set of curves, from the most central curve to the most outlying one. In the context of functional data, the depth may be defined in many different ways: see for example \cite{cuesta2006random}, \cite{gervini2012outlier}, \cite{fraiman2001trimmed} or \cite{hyndman2010rainbow}. Many of these notions of depth  are rather difficult to put into practice and are not considered here.
 In the following, we consider the modified band depth as defined by \cite{lopez2009concept} as well as  a depth notion based on the $L^2[0,T]$ distance from the center of the projected data onto the axis obtained by spherical PCA.

\subsection{Definition of the modified band depth (MBD)} \label{sec:depthmbd}
The Modified Band Depth (MBD), studied by \cite{lopez2009concept}, of a discretized curve is the number of times (or the proportion of time for continuous time observations) the curve, within a set of curves,  is ``lying between a couple of other curves'':
\begin{align*}
MBD_i &=\frac{1}{\binom{2}{n}}  \sum_{j,k \in s \; j \ne k} \frac{1}{D} \sum_{d=1}^{D} \1_{\left[\min(\hat B^{HT}_{1j}(t_d), \hat B^{HT}_{1k}(t_d)) \leq  \hat B^{HT}_{1i}(t_d) \leq \max(\hat B^{HT}_{1j}(t_d), \hat B^{HT}_{1k}(t_d))\right]} \\
& \approx \frac{1}{\binom{2}{n}}  \sum_{j,k \in s \; j \ne k}  \frac{1}{T} \int_0^T \1_{\left[\min(\hat B^{HT}_{1j}(t), \hat B^{HT}_{1k}(t)) \leq  \hat B^{HT}_{1i}(t) \leq \max(\hat B^{HT}_{1j}(t), \hat B^{HT}_{1k}(t))\right]} \ dt.
\end{align*}
This indicator takes into account the length of the time interval during which the curve $\hat B^{HT}_{1i}$ is not lying between each couple of other curves : a curve which is not included between others during a small time interval will be considered as ``less outlying'' than another one which is out during a longer period. So, the more often a curve is included  entirely between others  the more it is considered as central and by consequence, a curve with a high MBD will be considered as central.

\subsection{Central area based on MBD and robust estimator}
We compute the depth value $MBD_i$ of the conditional bias curve $\hat{B}^{HT}_{1i}(t)$ for all units $i$ belonging to the sample and let  $I$ be the  central region containing the 50\% of the deepest curves $\hat{B}^{HT}_{1i}(t), i\in s$. Let $L$ be the lower functional bound and  $U$ the upper functional bound computed over $I$, for $t \in [0,T]$:
\begin{align*}
L(t) =\min_{i \in I} \hat{B}^{HT}_{1i}(t) \quad \mbox{and}\quad
U(t) =\max_{i \in I} \hat{B}^{HT}_{1i}(t).
\end{align*}

The idea of using a 50\% central region  has been suggested first in the functional bagplot introduced by  \cite{hyndman2010rainbow} and in the functional boxplot by  \cite{sun2011functional}.

The conditional-bias curves entirely located inside these boundaries will not be modified whereas the curves taking  values outside the central region, for some period of time, will be truncated by using a truncation function $\psi$ as in the non-functional  case. An obvious candidate is the Huber function $\psi_c(y)=\mbox{max}(\mbox{min}(y, c), -c)$ depending on a tuning constant $c>0$ which can be easily generalized to take into account a region delimited by a lower and an upper delimiting curves:
\begin{align*}
\psi (\hat B^{HT}_{1i}(t)) &= \max\left(\min(\hat B^{HT}_{1i}(t), U(t)), L(t)\right) \quad \mbox{for all}\quad t\in [0,T].
\end{align*}
Remark that $L$ needs not to be $-U$. We propose to use the following truncation function,
\begin{align*}
\psi_{\alpha} (\hat B^{HT}_{1i}(t)) &=  \max \left( \min \left(\hat B^{HT}_{1i}(t), \mu_{\hat B}(t) + \alpha (U(t) - \mu_{\hat B}(t)) \right) , \mu_{\hat B}(t) + \alpha (L(t) - \mu_{\hat B}(t)) \right),
\end{align*}
where $\alpha$ is an unknown positive dilatation parameter that controls the size of the central region (\cite{sun2011functional}) and $\mu_{\hat B}$ is the  mean of the estimated conditional bias over the sample.
In practice, the delimiting curves $\mu_{\hat B}(t) + \alpha (U(t) - \mu_{\hat B}(t))$ and  $\mu_{\hat B}(t) + \alpha (L(t) - \mu_{\hat B}(t))$ are smoothed, using a mobile averaging technique, in order to  avoid a too irregular truncation. 

Figure~\ref{apreslacorrectionfonctionnel} displays the mechanism of  global truncation  based on modified band depth. 
The upper (U) and the lower (L) curves delimiting the $\alpha$ central area are plotted in blue. A conditional bias curve is plotted in red. Parts of this curve lying outside of the central area, plotted in red dotted line, will be truncated and replaced by the corresponding parts of the bound curves. We can remark that the central zone constructed in this way reflects the daily seasonality of the data.  In Figure \ref{comparaisondesbornes}, we plot central areas constructed according to the suggested methods: pointwise, spherical PCA and based on modified band depth. We can remark on this plot that the central area based on modified band depth is not symmetric and is much larger than the other two areas.

\begin{figure}[!h]
\begin{center}
\includegraphics[scale=0.75]{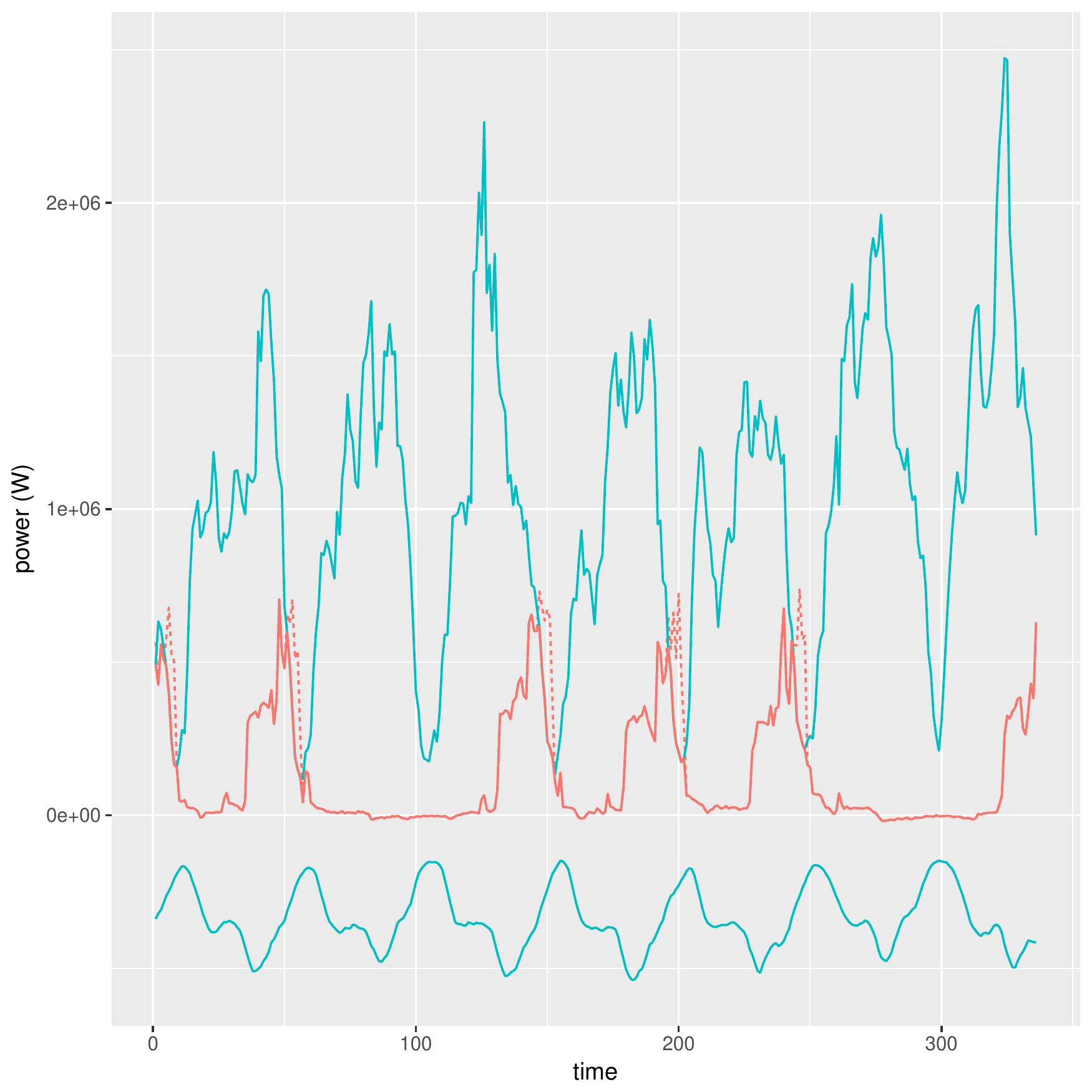} 
\end{center}
\caption{The upper (U) and the lower (L) curves delimiting the $\alpha$ central area are plotted in blue. A conditional bias curve is plotted in red and the truncated part of this curve is plotted in red dotted line. }
\label{apreslacorrectionfonctionnel}
\end{figure}

\begin{figure}[!h]
\begin{center}
\includegraphics[scale=0.8]{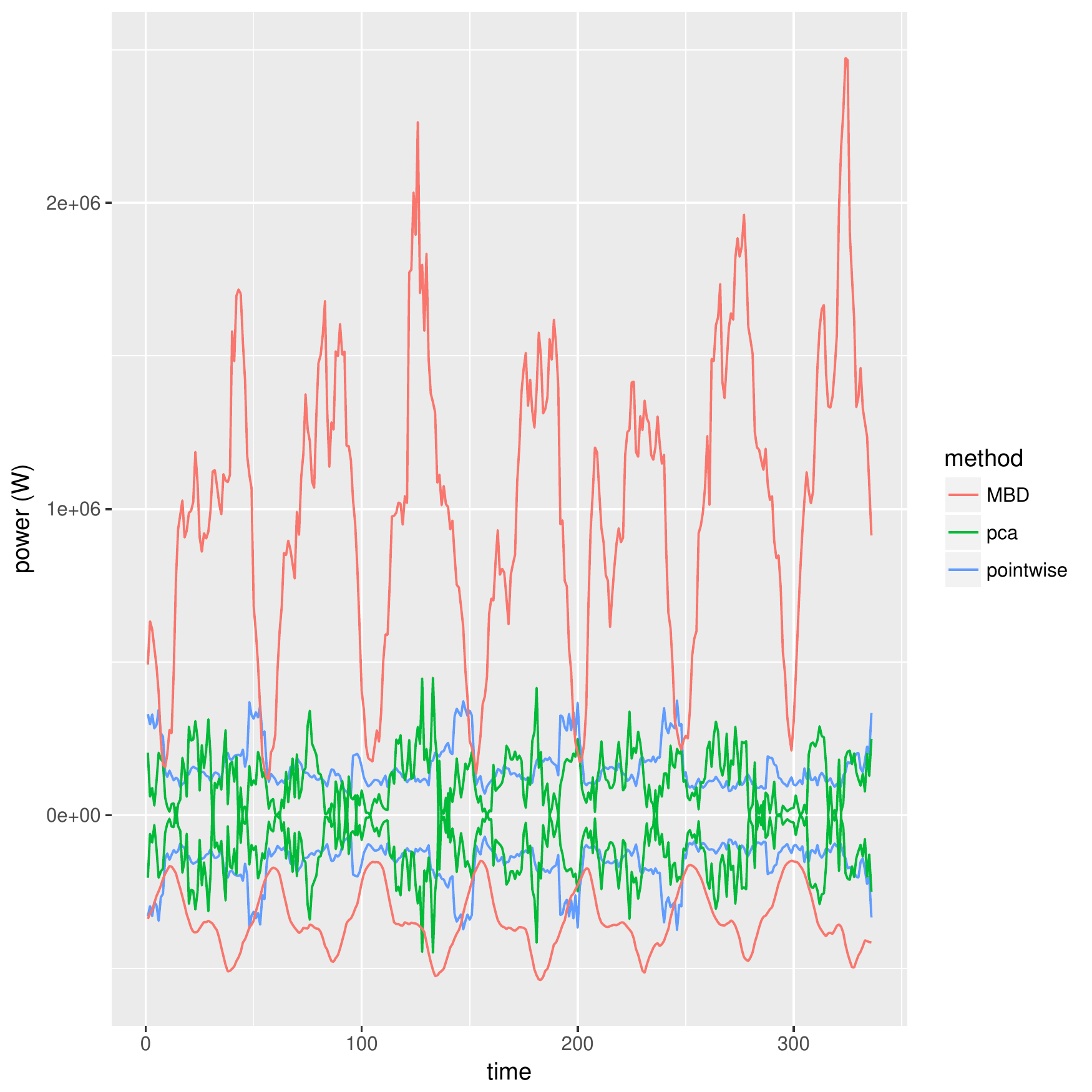} 
\end{center}
\caption{Central areas constructed according to three methods: point-wisely (blue line), robust PCA (red line) and functional modified band depth (green line).  }
\label{comparaisondesbornes}
\end{figure}

Using the global truncation function $\psi_{\alpha}$ and relation (\ref{robust_estim1}), we can construct a new robust estimator as follows  
\begin{align}
\hat t_{Y}^{(R4)} &=  \hat t_Y(t)+\sum_{i\in s}\psi_{\alpha} \left(B^{HT}_{1i}(t) \right)-\sum_{i\in s}B^{HT}_{1i}(t), \quad t\in [0, T].\label{MBD_estim}
\end{align}

The  dilatation factor $\alpha$ allows to control  the trade-off between bias and variance: for $\alpha$ small, the curves are strongly truncated meaning large bias and small variance whereas for $\alpha$ large, the curves are less truncated meaning less bias and larger variance. To determine the value of the truncation parameter $\alpha,$ we can use the functional minimax approach:
\begin{align*}
\alpha_{opt} &=  \mbox{arg}\min_{\alpha}\max_{i \in s} \frac{1}{D}\sum_{d=1}^D \left|\hat B^{HT}_{1i}(t_d) + \Delta_{\alpha}(t_d)\right|
\end{align*}
or the $q$th power criterion introduced in Section \ref{qth_criterion}:
 \begin{equation}
 \label{pbNewFonc}
\alpha_{opt}^{alt}=\mbox{arg}\min_{\alpha \geq 0} \frac{1}{D}\sum_{d=1}^{D} \sum_{i \in s} \left|\hat{B}_{1i}^{HT}(t_d) + \Delta_{\alpha}(t_d) \right|^q,
\end{equation}
where
\begin{align*}
\Delta_{\alpha}(t) &= \sum_{i \in s } \left(\psi_{\alpha}(\hat B^{HT}_{1i}(t)) - \hat B^{HT}_{1i}(t)\right). 
\end{align*}
The optimum values $\alpha_{opt}$ and $\alpha_{opt}^{alt}$ are obtained numerically by a Newton-Raphson algorithm.

\section{Mean square error estimation}

In this section, we derive approximate point-wise estimators of the mean square errors of the  robust estimators.
For a given time instant $t$, the mean square error (MSE) can be expressed as 
\begin{align*}
MSE_p\left(\hat{t}_Y^R(t)\right) &=V_p \left(\hat{t}_Y^R (t)\right) + E_p\left(\hat{t}_Y^R (t) - \hat t_Y (t) \right)^2-V_p(\hat{t}_Y^R (t) - \hat t_Y (t) ).
\end{align*}
Similarly to \cite{gwet1992outlier} and \cite{beaumont2013unified}, we suggest the following point-wise mean square error estimator:
\begin{align}
\widehat{MSE}_p(t)=v_p \left(\hat{t}_Y^R(t) \right) + \max \left[0, \left(\hat{t}_Y^R(t) - t_Y(t) \right)^2 - v_p\left(\hat{t}_Y^R(t) - \hat{t}_Y(t) \right) \right].
\label{GWformula}
\end{align}
where $v_p \left(\hat{t}_Y^R(t) \right)$ and $v_p\left(\hat{t}_Y^R(t) - \hat{t}_Y(t)\right)$ are design-consistent estimators of  $V_p \left(\hat{t}_Y^R (t)\right)$ and $V_p(\hat{t}_Y^R (t) - \hat t_Y (t))$.

Using relation (\ref{robust_estim1}), we can write the robust estimator $\hat{t}_Y^R(t)$ and $\hat{t}_Y^R(t) - \hat{t}_Y(t)$ as follows
\begin{eqnarray*}
\hat{t}_Y^R(t) &= &\sum_{i\in s}d_i(Y_i(t)+Z_{ic(t)}(t)) \quad \mbox{and}\\
\hat{t}_Y^R(t) - \hat{t}_Y(t) & = & \sum_{i\in s}d_iZ_{ic(t)}(t),
\end{eqnarray*}
where $Z_{ic(t)}(t)=\pi_i(\psi_{c(t)} (B^{HT}_{1i}(t)) -B^{HT}_{1i}(t)). $ 
For simple sampling designs for which  the first and second order inclusion probabilities are known, we can use the  Horvitz-Thompson variance estimator,
\begin{align}
v_p\left(\hat{t}_Y^R(t)\right) &=\sum_{i \in s}\sum_{j\in s} \frac{\pi_{ij}-\pi_i\pi_j}{\pi_{ij}} \frac{Y_i(t)+\hat {Z}_{i,c(t)}(t)}{\pi_i}\frac{Y_j(t)+\hat{Z}_{j,c(t)}(t)}{\pi_j},\label{var_estim_HT}
\end{align}
where $\hat {Z}_{ic(t)}(t)=\pi_i(\psi_{c(t)} (\hat{B}^{HT}_{1i}(t)) -\hat{B}^{HT}_{1i}(t)). $ A variance estimator is obtained for  $\hat{t}_Y^R(t) - \hat{t}_Y(t)$ by a similar procedure. For the robust estimator $\hat t_Y^{(R4)}$ given in (\ref{MBD_estim}) (section \ref{sec:depthmbd}) computed by using functional truncation methods based on depth, a variance estimator may be computed by using (\ref{var_estim_HT}) with $\hat{Z}_{i\alpha} (t) = \pi_i \left[ \psi_{\alpha}(\hat{B}^{HT}_{1i}(t))- \hat{B}^{HT}_{1i}(t) \right]$.

Using linearization techniques, we can write for the robust estimator $\hat t_Y^{(R2)}$ given in (\ref{robust_2}):
\begin{eqnarray*}
N^{-1} \left(\hat t_Y^{(R2)}(t)-t_Y \right) 
 & \simeq & N^{-1}\left(\sum_{i\in s} d_iu_i-\sum_{i\in U}u_i\right)+N^{-1}\sum_{k=1}^K\left[(\hat F^R_k- F_k)\mathbf{v}_k+(\hat{\mathbf{v}}_k-\mathbf{v}_k)F_k\right]
\end{eqnarray*}
where $u_i=N\boldsymbol{\Gamma}^{-1}\left(\frac{Y_i-m}{||Y_i-m||}\right), i\in U$ is the linearized variable of $m$ (see \cite{chaouch_goga_2012}) with $\boldsymbol{\Gamma}$ given in (\ref{spherical_covariance}). We also have
 \begin{eqnarray*}
 N^{-1}\sum_{k=1}^K\left(\hat{\mathbf{v}}_k-\mathbf{v}_k\right)F_k &\simeq& N^{-1}\left(\sum_{i\in s}d_i\sum_{k=1}^KF_k\tilde{\mathbf{v}}_{i,k}-\sum_{i\in U}\sum_{k=1}^KF_k\tilde{\mathbf{v}}_{i,k}\right),
 \end{eqnarray*}
 where $\tilde{\mathbf{v}}_{i,k}=\sum_{\ell\neq i}<Y_k-m,\mathbf{v}_i><Y_k-m,\mathbf{v}_\ell>\mathbf{v}_\ell/(\lambda_i-\lambda_{\ell})||Y_k-m||^2$ is the linearized variable of $\mathbf{v}_{k}$ obtained with similar arguments as in \cite{cardot2010properties}. We also have
  \begin{eqnarray*}
& & N^{-1}\sum_{k=1}^K(\hat F^R_k- F_k)\mathbf{v}_k\\
&\simeq& N^{-1}\sum_{i\in s}d_i\sum_{k=1}^K\left(<Y_i-m,\mathbf{v}_k>+\pi_i(\psi_{c_k}(B^F_{1i,k})-B^F_{1i,k})\right)\mathbf{v}_k-N^{-1}\sum_{i\in U}\sum_{k=1}^K<Y_i-m,\mathbf{v}_k>\mathbf{v}_k.
 \end{eqnarray*}
The variance estimator can then be computed for 
 $$
 \hat{Z}_{ic}(t)=\hat u_i(t)+\sum_{k=1}^K\left(\hat{F}_k\hat{\tilde{\mathbf{v}}}_{i,k}+<Y_i-\hat m,\hat{\mathbf{v}}_k>\hat{\mathbf{v}}_k+\pi_i(\psi_{c_k}(\hat{B}^F_{1i,k})-\hat{B}^F_{1i,k})\hat{\mathbf{v}}_k\right).
 $$
For the third robust estimator $\hat t^{(R3)}_Y$ given in (\ref{robustR3}) based on projection on known basis function $\phi_1, \ldots, \phi_Q$, the variance estimator is obtained for $Z_{ic}(t)=\sum_{q=1}^Q(Y_i+\pi_i(\psi_{c_q}(\hat{B}_{1i,q})-\hat B_{1i,q}))\phi_q(t). $ 

\subsection*{Bootstrap}
Approximation by bootstrap of the variance estimator $v_p$ used  in (\ref{GWformula}) is possible. We consider the without replacement bootstrap  introduced by \cite{gross_1980} for simple random sampling without replacement and that can be extended easily to stratified simple random sampling. The method consists in creating a pseudo-population $U^*$ by duplicating each unit $i\in s,$ $d_i=1/\pi_i$ times. Several methods have been proposed to deal with the situation when $d_i$ is not integer. We consider here the population bootstrap as suggested by \cite{booth1994bootstrap} which consists in completing $U^*$ by a simple random sampling of size $N-[N/n]$. From this pseudo-population, we select $B$ replication samples $s^*$ of size $n$ according to the initial sampling design. The bootstrap variance estimator of the robust estimator $\hat{t}_Y^R(t)$ is the empirical variance of $\hat{t}_Y^{R}(t)$ computed over the replication samples:
\begin{eqnarray}
v^{boot}_p(\hat{t}_Y^{R}(t))=\frac{1}{B-1}\sum_{b=1}^B\left(\hat{t}_Y^{R,b}(t)-\frac{1}{B}\sum_{b=1}^B\hat{t}_Y^{R,b}(t)\right)^2.\label{var_boot}
\end{eqnarray}

The value of the cut-off tuning parameter  $c$  is computed in each replication using the minimax approach. However, as the robust estimator based on the minimax approach is built using minima and maxima (which are "non linearizable" functions), we may have poor results for estimates based on  population bootstrap.

We also consider  the generalized bootstrap studied by \cite{bertail1997bootstrap}. For this bootstrap method, the sample of individuals is kept unchanged  but  the sampling weights are replicated. More precisely, we generate random weights $w^{*b}_i $ with $b=1, \ldots, B$ and $B$ large, such as $E(w^{*b}_i )=N^{-1}$, $\mbox{Var}(w^{*b}_i)=(1-\pi_i)N^{-2}$ and $\mbox{Cov}(w^{*b}_i,w^{*b}_j)=(1-\pi_i\pi_j/\pi_{ij})N^{-2}, i\neq j$. In practice, $w^{*b}_i$ may be simulated from a multivariate normal law with moments given above. The parameters of interest are written as functions of means and  means of type $\mu_Y=\sum_{i\in U}Y_i/N$, that estimated at each replication $b$ by $\hat {\mu}^{b}_Y=\sum_{i\in s}w^{*b}_id_iY_i. $  Formula (\ref{var_boot}) is next used to obtain a variance estimator of the robust estimators suggested in this paper.

\section{An illustration with real dataset}\label{section_application}

The methods and estimators studied in this paper are illustrated on data from the Irish Commission for Energy Regulation (CER) Smart Metering Project that was conducted in 2009-2010 (CER, 2011)\footnote{The data are available on request at the address: \\ \texttt{http://www.ucd.ie/issda/data/commissionforenergyregulation/}}.  This dataset contains thousands of electricity load curves of residential clients observed every half-hour during one year. We have selected from this dataset  $N=3994$ load curves without missing data and the electricity consumption recorded over one week, from the 18th to the 24th of January 2010. So, we have $D=336$ points in time. The interest parameter is the total consumption electricity during this week.

We consider two sampling designs: simple random sampling (SRS) without replacement and stratified random sampling with SRS within strata (STR). For the stratified sampling, strata are built by considering the total electricity consumption over the second semester of 2009. We have built 5 strata, containing respectively 1270, 898, 770, 659 and 397 statistical units. The first strata corresponds to meters with small levels of consumption whereas the last one is associated to the meters with the largest levels of consumption.  In this scenario, there are no "strata jumpers".  We consider two other STR samplings with $10\%$ strata jumpers (STR-SJ10) and respectively, with $20\%$ "strata jumpers" (STR-SJ20). These "strata jumpers" are simulated by selecting randomly, with equal probabilities, some units in the population and then affecting them to a wrong stratum, which is also chosen randomly with equal probabilities.
For each  scenario, we have considered three sample sizes: $n=40, 100$ and respectively, $n=400$ and the sample sizes within strata are computed according to the optimal allocation taking the consumption of the previous week as auxiliary information.

\subsection{Performance of the suggested robust estimators of the total consumption curve}
We evaluate the performances  and compare the different estimators presented in previous sections for various situations: different sampling designs, "strata jumpers" rates, sample sizes. The estimators considered  here are:
\begin{itemize}
\item the Horvitz-Thompson (HT) estimator;
\item the point-wise robust estimator $\hat t^{(R1)}_Y$ given by (\ref{robust_estim1}) with the tuning constant $c$ chosen by the minimax pointwise criterion (minimax pointwise) and the $q$th ($q=4, 10)$ power criteria (qth pointwise); 
\item the robust estimator $\hat t^{(R2)}_Y$ given by (\ref{robust_2}) and based on spherical PCA with the minimax criteria (robust PCA) and $K=5$ principal components;
\item the robust estimator $\hat t^{(R3)}_Y$ given by (\ref{robustR3}) and based on wavelet expansions with the minimax criteria (robust wave)\footnote{wavelets Daubechies Least Asymetric, 10};
\item the robust estimator $\hat t^{(R4)}_Y$ given by (\ref{MBD_estim}) with the global truncation function based on the modified band depth,  minimax criteria (MBD).
\end{itemize}

We  draw $I= 5000$ samples according to each sampling strategy and for each  estimator $\hat t_Y$ of $t_Y$, we compute the relative bias (RB) and the relative mean square error (RMSE): 
\begin{align*}
RB(\hat t_Y(t_d)) &= 100 \frac{ E_{MC}[\hat t_Y(t_d)] - t_Y(t_d)}{t_Y(t_d)}, \quad d=1, \ldots, D \\
RMSE(\hat t_Y(t_d)) & =100\frac{MSE_{MC}[\hat t_Y(t_d)]}{MSE_{MC}[\hat t_Y^{HT}(t_d)]},\quad d=1, \ldots, D
\end{align*}
where $E_{MC}[\hat t_Y(t_d)]=\sum_{i=1}^{I}\hat t_Y^{(i)}(t_d)/I$ and $MSE_{MC}(\hat t_Y(t_d)) = \sum_{i=1}^{I} (\hat t_Y^{(i)}(t_d) - t_Y(t_d))^2/I$ are the Monte-Carlo expectation and mean square error of $\hat t_Y(t_d)$ computed over the $I=5000$ samples and  $t_Y(t_d)$ is the real value of the total curve at   instant $t_d$. In order to assess the global performance, we consider the mean value, over time, of these indicators 
\[
RB=\frac{1}{D}\sum_{d=1}^DRB(\hat t_Y(t_d)) \quad \mbox{ and } \quad RMSE=\frac{1}{D}\sum_{d=1}^DRMSE(\hat t_Y(t_d)).
\]

\begin{table}[htbp]
\begin{center}
\small{\begin{tabular}{|c|ccc|ccc|ccc|ccc|}
   \hline
     Estimator &\multicolumn{3}{|c|}{SRS (size=)}  &\multicolumn{3}{|c|}{STR J0 (size=)} & \multicolumn{3}{|c|}{STR J10 (size=)}& \multicolumn{3}{|c|}{STR J20 (size=)} \\
  \cline{2-13}
RB (\%)&  \small{40}& \small{100} & \small{400}& \small{40}& \small{100} & \small{400} & \small{40}& \small{100} & \small{400} & \small{40}& \small{100} & \small{400} \\
  \hline
 \mbox{ minimax pointwise} & -9 & -6& -3 & -2 & -2& -1 & -4 &-3 & -1 & -5 &-4 & -2\\
 \mbox{4th pointwise} & -4 & -2 & -1 & -1 & -1& 0 & -2 & -1 & 0 & -2 & -1& 0\\
 \mbox{10th pointwise} & -7 &-4 & -2 & -2 &-2 & -1 & -3 & -2 & -1 & -4 & -3 & -1\\
 \mbox{Robust PCA} & -7 & -5 & -2 & -1 & -1 & 0 & -3 & -2 & -1 & -3& -3 & -1\\
 \mbox{Robust wave} & -7 & -5 & -2 & -2 & -1 & 0 & -3 & -2& -1 & -3 & -3 & -1\\
 \mbox{MBD} & -8 & -5 & -2 & 0 & 0 & 0 & -2 & -1 & -1 & -3 & -2 & -1\\
\hline
\end{tabular}}
\end{center}
\caption{Relative bias (RB in \%).}
\label{RB}
\end{table}

\begin{table}[htbp]
\begin{center}
\small{\begin{tabular}{|c|ccc|ccc|ccc|ccc|}
   \hline
     \small{Estimator} &\multicolumn{3}{|c|}{SRS (size=)}  &\multicolumn{3}{|c|}{STR J0 (size=)} & \multicolumn{3}{|c|}{STR J10 (size=)}& \multicolumn{3}{|c|}{STR J20 (size=)} \\
  \cline{2-13}
RMSE (\%)&  \small{40}& \small{100} & \small{400} & \small{40}& \small{100} &\small{400} & \small{40}& \small{100} & \small{400} & \small{40}& \small{100} & \small{400} \\
\hline
\small{\mbox{minimax pointwise}} & 74 & 85 & 96 & 97 & 96 & 98 & 89 & 88 & 91 & 84 & 86 & 91\\
  \small{\mbox{4th pointwise}} & 85 & 92 & 98 & 98 &97 & 99 & 93 & 94 & 97 & 91 & 93 & 97\\
  \small{\mbox{10th pointwise}} & 77 & 86 & 96 & 97 & 96 & 98 & 90 & 89 & 92 & 85 & 88 & 93\\
  \small{\mbox{Robust PCA}} & 73 & \textbf{83} &95 & 97 & 97 & 98 & 87 & 86 & 89 & 82 & 84 & 90\\
  \small{\mbox{Robust wave}} & \textbf{72} & \textbf{83} & \textbf{94} & \textbf{94} & \textbf{93} & \textbf{95} & \textbf{85} & \textbf{83} &\textbf{88} &\textbf{81} & \textbf{82} &\textbf{89}\\
 \small{ \mbox{MBD}} & 75 & 86 & 97 & 100 & 99 & 100 & 89 & 87 & 89 & 85 & 85 &  90\\
 \hline
\end{tabular}}
\end{center}
\caption{Relative MSE (RMSE in \%).}
\label{RMSE}
\end{table}

The results  are reported in Tables \ref{RB} and \ref{RMSE}.
 We can note that the use of  robust methods lead to important precision gains particularly when the sample size is small. For the SRS design and for  the best robust method, the global error is reduced by 28\%  when the sample size is 40, by 17\% when the sample size is 100. Moreover the robust methods never deteriorate significantly the global precision. The performances of these robust estimators are quite similar. Nevertheless, the functional methods based on wavelets or spherical PCA are slightly better, followed by the robust estimator built with the global truncation function based on the modified band depth.

However, robust methods tend to underestimate the population total curve because the outliers, whose influence is reduced, are often units with large values. So the robust methods lead to a negative bias of a few percents. This bias is larger for more imprecise sampling designs.

For stratified samplings without strata jumpers, the use of robust approaches do not lead to much improvement. This result is not surprising  since a good stratification permits to reduce the influence of large units during the sampling phase.  We also remark that, in this situation, the Horvitz-Thompson estimator is nearly as effective as the less accurate robust approach (MBD). We also note that the relative bias in that case is very small (less than 2\%) which could mean that the conditional biases are almost not truncated. 

 On the contrary, in presence of strata jumpers, the use of robust methods permits to improve significantly the precision, especially when the strata jumpers rate is high. The observed gains are approximately  15\% in presence of  10\% of strata jumpers. 

On this simulation study, the minimax criterion for the choice of the tuning constant gives better result than the $q$th power criterion. As expected, the performances of the robust estimators built on minimax and the $q$th power criteria are very similar for $q$ large.  
We plot in Figure \ref{temporel}, the relative mean square error along time for the suggested estimators and SRS sample of size $n=100$. We can remark that RMSE varies much over time.

\begin{figure}[!h]
\begin{center}
\includegraphics[scale = 1.05]{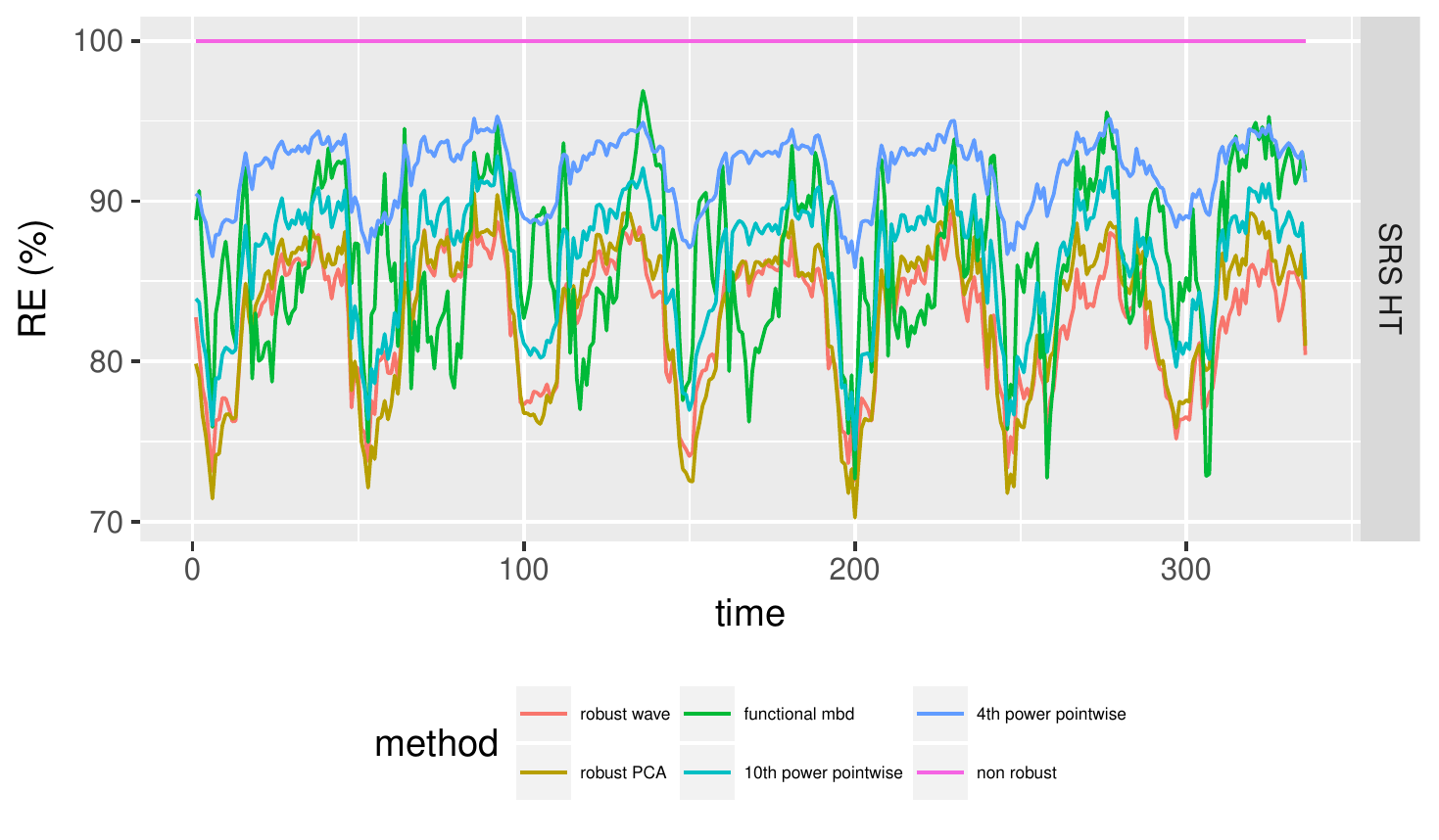} 
\caption{Evolution of the RMSE over time for different methods and SRS of size $n=100$}
\label{temporel}
\end{center}
\end{figure}

We have also computed the execution time for the suggested methods. The point-wise robust method  is the fastest robust method: for example, for $n=100,$ its mean execution time for one simulation is around $6\times 10^{-2}$ seconds. The projection methods are slightly slower but they never take more than $2\times 10^{-1}$ seconds and finally, the functional MBD estimator is around $3\times 10^{-1}$ second always for a sample of $100$ individuals. Moreover, this computation time only moderately increases when the sample size increases. 

\subsection{A comparison of the mean square error estimators}

We compare the linearization approaches with the population bootstrap and the generalized bootstrap. For the bootstrap methods, we consider $B=1000$ replications.
We  compute, by means of $I=5000$ simulations, the estimators of MSE for the following total estimators: 
$\hat t_{HT}$ the usual Horvitz-Thompson estimator,    $\hat t^{(R1)}_{HT}$ the point wise estimator robustified via the minimax criterion,  $\hat t^{(R1)}_{HT}$  robustified via the $10$th 
 power criterion;  $\hat t^{(R2)}_Y$  based on robust PCA and robustified by the minimax criterion, and the total estimators based on a  wavelet expansion, $\hat t^{(R3)}_Y$, robustified via the minimax criterion. The relative bias of the estimators of  MSE  are given in Table \ref{resGlobauxVariance}. 

\begin{table}[htbp]
\begin{center}
\small{\begin{tabular}{|c|ccc|ccc|}  
   \hline
    &\multicolumn{3}{c}{SRS }  &\multicolumn{3}{c|}{STR J10} \\
 &  Gen. Boot. & Gross' Boot. & Lin. &  Gen. Boot.   & Gross' Boot. & Lin. \\
 \hline
   HT & -1 & 4 & -1 & -1 & 0 & -1 \\ 
   Minimax & 23 & 4 & -27 & 20 & 0 & -17 \\
 10th power & 25 & 9 & -21 & 23 & 2 & -15 \\  
  RPCA & 24 & 5 & -32 & 22 & 2 & -26 \\  
Wavelet & 25 & 6 & -23 & 25 & 3 & -21 \\ 
\hline
\end{tabular}}
\end{center}
\caption{Relative Bias of $\widehat{MSE}$ for samples of size $n=100$, with different MSE estimation procedures.} 
\label{resGlobauxVariance}
\end{table}

We can note that, as expected, all the estimators of the mean squared error  provide reasonable results for the non robust Horvitz-Thompson estimator.  
We also note that,  in our particular context,  the MSE estimators based on linearization lead to a significant underestimation when the estimator is robust: this underestimation is about $20\%$ when the tuning constant is determined by the new criterion and about $30\%$ for the minimax criterion. This was expected because we do not take into account the variability due to the data driven selection  of the value of the tuning constant. 
On the contrary, we observe a strong overestimation of the variance of robust estimators for generalized bootstrap whereas Gross' bootstrap seems to gives satisfactory results for all the scenarios.

As far as computation time is concerned, the  MSE estimation based on linearization is quite fast, around a few tenth of second, whereas the bootstraps are significantly slower, around 20 seconds for the generalized bootstrap and 80 seconds for Gross' bootstrap.

\section{Concluding remarks }

Three types of robust estimators have been proposed in this work  in order to adapt, from the univariate to the functional case, robust estimation techniques in finite populations:
\begin{itemize}
\item Point-wise robust estimators built by truncation of the conditional bias at each instant.
\item Robust estimation based on dimension reduction methods.
\item Global functional truncation methods based on statistical depth.
\end{itemize}

These approaches have been compared on the estimation of  totals of load electricity curves. The comparisons have shown that robust methods lead to a noticeable improvement of the precision, especially when the estimation is the most imprecise (small sample sizes, sampling designs which do not include any auxiliary information or presence of very heterogeneous units in a same stratum). When the precision of the non robust estimators is already satisfying (larger sample sizes or relevant stratification), the precision gains are smaller. However, a very important fact is that robust methods never deteriorate the quality of the estimation.

We can also rank, in our simulation study, the different approaches according to their performances. The robust estimators based on wavelets expansion or on robust PCA are the most effective, followed by pointwise robust estimators and then  global functional truncation based on the notion of depth.

The corresponding mean squared errors can be estimated using linearization or bootstrap. Gross' bootstrap seems to give satisfactory results but is computationally intensive whereas linearization-based techniques are much faster but may lead to noticeable underestimations.

We have also proposed a new criterion for choosing the tuning constant based on the $q$th power of the conditional bias. Its  application on a real dataset  showed that  the minimax criterion is more effective than this new criterion.

Since our simulation studies  have shown that the use of robust methods seems to be  particularly relevant for small sample sizes,  a natural extension of the work presented here is robust estimation of curves  for small areas as considered in the PhD dissertation of \cite{anne_these}. However, robust estimation for small areas is a challenging issue. Indeed, aggregating robust small domain estimates lead to  overall estimators that may have a large bias,  as noted in \cite{rivest_hidirouglou},  \cite{Favre-Martinoz_2015} and \cite{clark_2017}. Another difficulty is the  fact that aggregated domain estimates may not be consistent with the population total estimate. To overcome this difficulty, one can use the approach suggested in  \cite{Favre-Martinoz_2015} based on a calibration technique.

\section*{Appendix}

\small{

We suppose that the sample size $n$ and the population size $N$ become large.  We consider a sequence of growing and nested populations $U_N$ with size $N$ tending to infinity and a sequence of samples $s_N$ of size $n_N$ drawn from $U_N$ according to the sampling design $p_N(s_N)$. The first and second order inclusion probabilities are respectively denoted by $\pi_{kN}$ and $\pi_{klN}$. For simplicity of notations and when there is no ambiguity, we drop the subscript $N$. To prove our asymptotic results we need to introduce the following assumptions.
\begin{itemize}
\item[\bf A1.] We assume that $\displaystyle\lim_{N\rightarrow \infty} \frac{n}{N}=\pi \in (0,1).$
\item[\bf A2.] We assume that $\displaystyle\min_{k \in U} \pi_k\geq\lambda>0$, $\displaystyle\min_{k \neq l\in U} \pi_{kl}\geq\lambda^*>0$ and

 \begin{align*}
\pi_{kl}& = \pi_k\pi_l\left\lbrace1-\frac{(1-\pi_k)(1-\pi_l)}{D(\pi)}[1+o(1)]\right\rbrace
\end{align*}
 uniformly in $k$ and $l$, where $D(\pi) = \sum_U \pi_i(1-\pi_i)$.
\item[\bf A3.] There are two positive constants  $C_2$ and $C_3$ and $\beta>1/2$ such that, for all $N$ and for all $(r,t)\in[0,T]\times [0,T]$,
\[
\frac{1}{N}\sum_{k\in U}(Y_k(0))^2<C_2 \quad \mbox{and} \quad \frac{1}{N}\sum_{k\in U}(Y_k(t)-Y_k(r))^2<C_3\vert t- r\vert^{2\beta}.
\] 
\end{itemize}

Assumptions {\bf A1} and {\bf A2} are classical hypotheses in survey sampling and deal with the first and second order inclusion probabilities. They are satisfied for high entropy sampling designs with fixed size (see for example \cite{hajek_1964}). They directly imply that $c n \leq D(\pi) \leq n,$ for some strictly positive constant $c$.  
Assumption {\bf A3} is a  regularity condition on the individual trajectories. Even if point-wise consistency, for each fixed value of $t$, can be proven without any condition on $\beta$,  this regularity condition is required to get the uniform convergence of the mean estimator (see \cite{cardot2011horvitz}). 

\begin{Prop}
Suppose that {\bf A1} and {\bf A3}  are fulfilled and the sampling design is simple random sampling without replacement or suppose that hypotheses {\bf A1}-{\bf A3} are fulfilled. Then 
\[
\sup_{t \in [0,T]} \left| \sum_{i\in s}d_iA_i(t)-\sum_{i\in U}A_i(t) \right| =O_p(n^{-1/2}).
\]
\end{Prop}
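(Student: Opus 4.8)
The plan is to recognize the bracketed quantity as the Horvitz--Thompson sampling error for the total of the auxiliary values $A_i$, to make explicit a cancellation that a term-by-term variance bound cannot see, and then to pass from pointwise to uniform control using the Hölder regularity of {\bf A3}. Write $W(t):=\sum_{i\in s}d_iA_i(t)-\sum_{i\in U}A_i(t)=\sum_{i\in U}(d_iI_i-1)A_i(t)$, which is exactly the sampling error of the Horvitz--Thompson estimator of $t_A(t)=\sum_{i\in U}A_i(t)$. Note first that $A_i(t)=O(1)$: indeed {\bf A3} gives $N^{-1}\sum_{k\in U}Y_k(t)^2\le 2C_2+2C_3T^{2\beta}$ uniformly in $t$ (using $Y_k(t)^2\le 2Y_k(0)^2+2(Y_k(t)-Y_k(0))^2$), and $\min_i\pi_i\ge\lambda$. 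Consequently a direct diagonal/off-diagonal bound of $V_p(W(t))$ only gives $O(N)$, i.e. $W(t)=O_p(\sqrt n)$; the announced rate is possible only because the leading $O(N)$ contributions cancel.

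For simple random sampling the cancellation is exact and needs only {\bf A1} and {\bf A3}. With $\pi_i=n/N$ and $\pi_{ij}=n(n-1)/(N(N-1))$ one computes $A_i(t)=(N-1)^{-1}\sum_{j\ne i}Y_j(t)$, and a short calculation then gives the exact identity
\begin{equation*}
W(t)=\frac{N}{N-1}\bigl(\overline{Y}_U(t)-\overline{Y}_s(t)\bigr).
\end{equation*}
Thus $W$ equals, up to the bounded factor $N/(N-1)$, the centred Horvitz--Thompson estimator of the mean curve, so that $\sup_t|W(t)|=O_p(n^{-1/2})$ is precisely the uniform rate established for the functional mean estimator in \cite{cardot2011horvitz}.

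For fixed-size high-entropy designs I would reproduce the same reduction approximately, which is where {\bf A2} enters. Interchanging the order of summation,
\begin{equation*}
W(t)=-\sum_{j\in U}\frac{Y_j(t)}{\pi_j}\sum_{i\in U,\,i\ne j}\frac{\pi_{ij}-\pi_i\pi_j}{1-\pi_i}\,(d_iI_i-1),
\end{equation*}
I would insert the expansion of {\bf A2}, which gives $\frac{\pi_{ij}-\pi_i\pi_j}{1-\pi_i}=-\frac{\pi_i\pi_j(1-\pi_j)}{D(\pi)}[1+o(1)]$ uniformly, and then exploit the fixed-size constraint $\sum_{i\in U}(I_i-\pi_i)=0$, equivalently $\sum_{i\ne j}(I_i-\pi_i)=-(I_j-\pi_j)$ together with $\pi_i(d_iI_i-1)=I_i-\pi_i$. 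This collapses the inner sum to the single term $\frac{\pi_j(1-\pi_j)}{D(\pi)}(I_j-\pi_j)$ and yields
\begin{equation*}
W(t)=-\frac{1}{D(\pi)}\sum_{j\in U}(1-\pi_j)(I_j-\pi_j)Y_j(t)\,[1+o(1)].
\end{equation*}
This is the decisive step: after the collapse $W(t)$ is a sum of centred terms carrying the prefactor $1/D(\pi)\asymp 1/n$, which is what converts the $O(N)$ variance into the rate $O(1/n)$.

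With this representation the remaining estimates are routine. A diagonal/off-diagonal bound of $V_p\bigl(\sum_j(1-\pi_j)I_jY_j(t)\bigr)$ using {\bf A2} and $N^{-1}\sum_kY_k(t)^2\le C$ gives $O(N)$, so after division by $D(\pi)^2\asymp n^2$ one gets $E_p(W(t)^2)\le C/n$ uniformly in $t$. Running the same computation with $Y_j(t)$ replaced by $Y_j(t)-Y_j(r)$ and invoking $N^{-1}\sum_k(Y_k(t)-Y_k(r))^2\le C_3|t-r|^{2\beta}$ from {\bf A3} yields $E_p\bigl((W(t)-W(r))^2\bigr)\le C|t-r|^{2\beta}/n$; since $\beta>1/2$, a chaining argument over a dyadic grid of $[0,T]$, exactly as in \cite{cardot2011horvitz}, upgrades these to $E_p\bigl(\sup_{t\in[0,T]}|W(t)|\bigr)\le C/\sqrt n$, hence $\sup_t|W(t)|=O_p(n^{-1/2})$. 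The main obstacle is the cancellation itself: since the crude bound is off by a factor $n$, everything rests on the algebraic collapse above, and in the high-entropy case on verifying that the $[1+o(1)]$ error of {\bf A2} contributes only a lower-order remainder uniformly in $j$ (its $L^2$ contribution is negligible once the $1/D(\pi)$ prefactor is taken into account); the pointwise and uniform bounds are then standard given {\bf A3}.
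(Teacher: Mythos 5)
Your proof is correct and follows essentially the same route as the paper: the exact identity $W(t)=\tfrac{1}{N-1}\bigl(t_Y(t)-\hat t_Y(t)\bigr)$ for SRS, and for high-entropy designs the reduction via {\bf A2} to $\tfrac{1}{D(\pi)}\bigl(\sum_U \pi_i(1-\pi_i)Y_i(t)-\sum_s d_i\pi_i(1-\pi_i)Y_i(t)\bigr)$ followed by the uniform rate of Proposition 3.1 in \cite{cardot2011horvitz}. The only difference is one of detail: you make explicit the algebra the paper compresses into ``after some algebra'' (in particular the role of the fixed-size constraint $\sum_{i\in U}(I_i-\pi_i)=0$ in collapsing the inner sum) and you re-sketch the chaining argument rather than merely citing it.
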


\begin{proof}
Recall that, for $t \in [0,T]$,
\begin{align*}
A_i(t) &=\frac{-1}{1-\pi_i}\sum_{j\in U, j\neq i}\frac{\pi_{ij}-\pi_i\pi_j}{\pi_j}Y_j(t). 
\end{align*}

For simple random sampling without replacement, $\pi_i = n/N$ and $\pi_{ij} = n(n-1)/(N(N-1))$ for $i\neq j$, and we have that (with $d_i = 1/\pi_i$),
\begin{align}
\sum_{i\in s}d_iA_i(t)-\sum_{i\in U}A_i(t) & = \frac{1}{N-1} \left( t_Y(t)  - \widehat{t}_Y(t) \right)
\end{align}
The result is then a direct consequence of Proposition 3.1 in \cite{cardot2011horvitz}. 

\medskip

Consider now the more general case of fixed-size high entropy sampling designs. 
Introducing the approximation to the second order inclusion probabilities in $A_i$ we get after some algebra
\begin{align}
\sum_{i\in s}d_iA_i(t)-\sum_{i\in U}A_i(t) \approx \frac{1}{D(\pi)} \left( \sum_U \pi_i (1-\pi_i) Y_i(t) -   \sum_s d_i \pi_i (1-\pi_i) Y_i(t) \right), \quad t \in [0,T]. 
\end{align}
The weighted trajectories  $\pi_i(1-\pi_i)Y_i(t), t \in [0,T]$ also satisfy assumption {\bf A3} and the  result is a consequence of Proposition 3.1 in \cite{cardot2011horvitz} (see also \cite{cardot_goga_lardin_scandin}). 
\end{proof}
}

\normalsize

\end{document}